\documentclass[12pt,onecolumn]{IEEEtran}
\usepackage[ruled,linesnumbered]{algorithm2e}
\usepackage{algorithmic}
\SetKwInput{KwInput}{Input}                % Set the Input
\SetKwInput{KwOutput}{Output}              % set the Output
\newcommand{\nonl}{\renewcommand{\nl}{\let\nl\oldnl}}% Remove line number for one line

\usepackage{enumerate}
\usepackage{multirow,array}
\usepackage{cite}
\usepackage{graphicx}
\usepackage{psfrag}
\usepackage{url}
\usepackage{amsmath}
\usepackage{amsthm}
\usepackage{array}
\usepackage{algorithm2e}
\usepackage{amssymb}
\usepackage{amsfonts}
\usepackage{float}
\usepackage{textcomp}
\usepackage{xcolor}
\usepackage{tabu}
\usepackage{array}

\usepackage{subfiles}

\newcolumntype{P}[1]{>{\centering\arraybackslash}p{#1}}
\newcolumntype{M}[1]{>{\centering\arraybackslash}m{#1}}
%//////////////////////////////////////

% \usepackage{dsfont}
% \usepackage{amsmath}
% \usepackage{xparse}
\usepackage{float}

\newtheorem{proposition}{Proposition}
\newtheorem*{conjecture*}{Conjecture}
\newtheorem{lemma}{Lemma}
\newtheorem{observation}{Observation}
\newtheorem{corollary}{Corollary}
\newtheorem{remark}{Remark}
\newtheorem{definition}{Definition}
\newtheorem{theorem}{Theorem}
\newtheorem{example}{Example}

%%%%%%%%%%%%%%%%User defined commands
\newcommand{\mc}{\mathcal}
\newcommand{\D}{\Delta}
\newcommand{\de}{\delta}
\newcommand{\ch}{\mathcal{H}}
\title{Bounding the Optimal Length of Pliable Index Coding via a Hypergraph-based Approach}
\begin{document}
%%%%%
%%%%%%%%%%%%%%%%%%%%%%%
\author{
  \IEEEauthorblockN{Tulasi Sowjanya B., Visvesh Subramanian, Prasad Krishnan}
  {
     %Signal Processing and Communication Research Center\\
                    %International Institute of Information Technology, Hyderabad\\ 
                    %need this space\\
                    %Email: \{tulasi.sowjanya@research., visvesh.subramanian@research., prasad.krishnan@\}iiit.ac.in
                    }
\vspace{-0.2cm}
%%%%%%%%%%%%%%%%%%%
}
\maketitle
%%%%
\begin{abstract}
In pliable index coding (PICOD), a number of clients are connected via a noise-free broadcast channel to a server which has a list of messages. Each client has a unique subset of messages at the server as side-information and requests for any one message not in the side-information. A PICOD scheme of length $\ell$ is a set of $\ell$ encoded transmissions broadcast from the server such that all clients are satisfied. Finding the optimal (minimum) length of PICOD and designing PICOD schemes that have small length are the fundamental questions in PICOD. In this paper, we use a hypergraph-based approach to derive new achievability and converse results for PICOD. We present an algorithm which gives an achievable scheme for PICOD with length at most $\Delta(\mathcal{H})$, where $\Delta(\mathcal{H})$ is the maximum degree of any vertex in a hypergraph that represents the PICOD problem. We also give a lower bound for the optimal PICOD length using a new structural parameter associated with the PICOD hypergraph called the nesting number. We extend some of our results to the PICOD problem where each client demands $t$ messages, rather than just one. Finally, we identify a class of problems for which our converse is tight, and also characterize the optimal PICOD lengths of problems with $\Delta(\mathcal{H})\in\{1,2,3\}$. 
\end{abstract}
\let\thefootnote\relax\footnotetext{
 Visvesh, Tulasi and Dr.\ Krishnan are with the Signal Processing \& Communications Research Center, International Institute of Information Technology Hyderabad, India, email:\{tulasi.sowjanya@research., visvesh.subramanian@research., prasad.krishnan@\}iiit.ac.in. 
%  Acknowledgements: Dr. \ Krishnan gratefully acknowledges support from SERB-DST project CRG/2019/005572.
 }
%\emph{The full version of this paper, including additional results, proofs, and examples, is available online~\cite{public_version}.}
\section{Introduction}
The problem of \textit{pliable index coding (PICOD)}, introduced in \cite{brahma2015pliable} by Brahma and Fragouli,  consists of a server with messages, and a number of clients, connected via a noiseless broadcast channel. Each client possesses a subset of messages as \textit{side-information} and demands for \textit{any} message from the set of messages in its \emph{request-set} (those messages not present as its side-information). The server then designs (possibly coded) transmissions and broadcasts them to the clients, which then decode their desired symbols. This is called as a pliable index coding scheme, or a \textit{PICOD scheme}. The number of transmissions made by the server is termed as the \textit{length} of the PICOD scheme, and the goal is to design PICOD schemes with lengths as small as possible. The PICOD problem is a variant of the well-studied index coding problem, introduced in \cite{ISCOD_BiK} by Birk and Kol. The setting in index coding is different from PICOD only in the sense that each client demands a specific message in the request-set, rather than any message. Apart from being an important problem in information theory, pliable index coding was also shown to be useful for constructing data-shuffling schemes for distributed computing in \cite{song2019pliable_shuffling}.

In \cite{brahma2015pliable}, it was shown that finding the optimal code length for PICOD problem is NP-hard. However, via a probabilistic argument, it was shown in \cite{brahma2015pliable} that an achievable scheme with length $\mc{O}(\min(m,n,\log m (1+\log^+(\frac{n}{\log m})))$ exists, where $m$ is the number of messages at the server and $n$ is the number of clients in the system. This means that  $\mc{O}(\log^2n)$ transmissions are sufficient to solve PICOD, when $m=n^\delta$ for some constant $\delta$. A further generalization of PICOD in which each client demands $t$ messages for some $t\geq 1$ was considered in \cite{multiple_req_brahm_frag}. We call such PICOD problems as PICOD($t$) problems. For the special class of PICOD($t$) problems with equal-sized request sets, the work \cite{multiple_req_brahm_frag} presented bounds on the optimal PICOD length. 

Algorithms for PICOD were presented in \cite{brahma2015pliable}, using a greedy approach, via randomized covering, and by utilizing connections to index coding. In \cite{PolyTime_PICOD}, Song and Fragouli presented a deterministic algorithm for PICOD via another greedy technique, which constructed a PICOD scheme with at most $\mc{O}(\log^2n)$ transmissions. This remains the state of the art, with respect to deterministic algorithms for achievability in general PICOD problems. The work \cite{PolyTime_PICOD} also considered the PICOD($t$) scenario, and showed another variant of the same algorithm constructing a PICOD scheme of length $\mc{O}(t\log n + \log^2n)$ transmissions are sufficient.  Achievability schemes and information theoretic converses for some special classes of PICOD problems were derived in \cite{liu2019tight,8849812_ShanuRaj_CodeConstrPIC}. Converse techniques were further improved in \cite{ong2019optimal,ong2019improved} and lower bounds on the optimal PICOD length were presented in the same for general PICOD problems based on their structure. %This remains the state of the art, with respect to deterministic algorithms for PICOD achievability. Achievability schemes and information theoretic converses for some special classes of PICOD problems were derived in \cite{liu2019tight,8849812_ShanuRaj_CodeConstrPIC}. Converse techniques were further improved in \cite{ong2019optimal,ong2019improved} and lower bounds on the optimal PICOD length were presented for general PICOD problems based on their structure. 

In \cite{krishnan2021pliable}, Krishnan et al. considered a hypergraph coloring approach to the PICOD problem. A given PICOD problem can be represented by a \textit{PICOD hypergraph} $\mc{H}$ with the messages as its vertices, and the request-sets of clients as its edges. Using \textit{conflict-free colorings} \cite{even2003conflict} of the PICOD hypergraph, it was shown in \cite{krishnan2021pliable} that there exists a PICOD scheme which achieve length $\mc{O}(\log^2\Gamma)$, where $\Gamma$ is the maximum number of edges intersecting with any edge. This bound is in general tighter than the one derived in \cite{brahma2015pliable,PolyTime_PICOD}, as $\Gamma$ can be much smaller than $n$. However, the proof is via a probabilistic argument again, and a deterministic algorithm for constructing a PICOD scheme of length $\mc{O}(\log^2\Gamma)$ is not known so far.

In the present work, we consider the hypergraph model for PICOD introduced in \cite{krishnan2021pliable} and present new achievability and converse results for PICOD based on this model. After briefly recollecting the formal PICOD system model and the hypergraph framework in Section \ref{sec:systemmodel_and_hypergraph}, in Section \ref{sec:poly-time-delta-algorithm} we show that the optimal PICOD length $\beta(\ch)$ for a PICOD problem represented by hypergraph $\ch$ satisfies $\beta(\ch)\leq \D(\ch)$, where $\D(\ch)$ refers to maximum degree of any vertex in the hypergraph (i.e., the maximum number of request-sets that any message is present in). This bound is better than the $\mc{O}(\log^2n)$ bound in general, for PICOD hypergraphs in which $\D(\ch)$ is smaller than $\log^2 n$. %Then, we show in Section \ref{subsec:algorithm} a simple greedy algorithm which achieves this bound and runs in time polynomial in the system parameters. 
As a constructive proof of this result, we present Algorithm \ref{Alg1} in Section \ref{subsec:algorithm}, which uses the greedy approach to achieve this bound, and runs in time polynomial in the system parameters. In Section \ref{sec:lowerbound_via_nesting}, we present a lower bound for $\beta(\ch)$ based on a new parameter of hypergraphs that we define in Subsection \ref{subsec:nesting}, called the \textit{nesting number}. The proof of this result, given in Subsection \ref{subsec:proofoftheoremlowerbound}, depends on a known lower bound for index coding from \cite{neely2013dynamic}. We extend some of our results to the case of PICOD($t$) problems, in which each client demands $t$ messages, for some $t\geq 1$. To the best of our knowledge, our achievability and converse results are novel. We use our results to characterize some PICOD problems for which our converse is tight (Corollary \ref{Corollary-1}). We also give the complete solvability characterization of PICOD problems with maximum degree $\Delta(\ch)\in \{1,2,3\}$, using known results in index coding, in conjunction with our present results (Subsection \ref{subsec:lowdeltaproblems}). The paper concludes in Section \ref{sec:discussion} with some directions for further research. 

% This paper is organized as follows. In Section \ref{sec
% We now present the organization of this work. we have derived a new upper bound on PICOD problem based on hypergraph approach. Achievability of this bound is then shown by deriving a greedy polynomial time algorithm. We have then provided some examples for which guarantee provided by our algorithm is better than the guarantee provided by algorithm in \cite{PolyTime_PICOD}. Later we provided information theoretic converse proof for a class of PICOD problems for which our bound is tight. We have then provided a complete characterization of PICOD hypergraph for maximum degree of $\D(\mc{H}) = 1, 2$ and $3$. Examples for the same are also presented in the paper.

\textit{{Some notation and basic definitions:}} We set up some notation and review some basic definitions related to (hyper)graphs that are useful for this work. For some positive integers $m$ and $n$, $n > m$ we denote the set $\{1, 2,....n\}$ by $[n]$ and the sequence $(n, n-1, n-2,....m)$ by $[n : m]$. For sets $A$ and $B$,  $A\backslash B$ denotes the set of elements in $A$ but not in $B$. If $B=\{v\}$, then we also denote $A\setminus B$ as $A\setminus v$. The notation $\lvert A\rvert$ represents the size of set $A$ and  $\emptyset$ denotes an empty set. For a directed graph $G$ with vertex set $\cal V$ and edge set ${\cal E}\subseteq {\cal V}\times {\cal V}$, an induced subgraph of $G$ is a graph with vertex set ${\cal U}\subseteq {\cal V}$ and edge set $\{E\in{\cal E}: E\subseteq {\cal U}\times {\cal U}\}$ (that is, all edges in $G$ between vertices in $\cal U$ are included in the induced subgraph). We now present some definitions related to hypergraphs. A hypergraph $\mc{H}(\mc{V},\mc{E})$ consists of a pair of sets: the \textit{vertex set} denoted by $\mc{V}$, and the \textit{edge set} denoted by $\mc{E}$. The elements of $\mc{E}$ are subsets of the vertex set. We shall also use $\mc{V}(\mc{H})$ and $\mc{E}(\mc{H})$ to refer to the set of vertices $\mc{V}$ and set of edges $\mc{E}$ of a hypergraph $\mc{H}(\mc{V},\mc{E})$, respectively. The \textit{degree} $\D_v$ of a vertex $v$ is defined as the number of edges that contain it (i.e., $\D_v = \lvert \{E \in \mc{E}: v \in E\} \rvert$). The \textit{maximum degree} of hypergraph $\mc{H}(\mc{V},\mc{E})$ is then defined as $\D(\mc{H}) \triangleq \max_{v \in \mc{V}} \D_v$. Two vertices $v_i$ and $v_j$ of $\mc{H}$ are said to be \textit{adjacent} to each other if there is at least one  edge which contains both the vertices. An \textit{independent set} of $\mc{H}$ is a set of vertices such that no two of them in that set are adjacent. For a set  $A \subseteq \mc{V}(\mc{H})$, a set $I\subseteq A$ is called a \textit{maximal independent set in $A$} if $I$ is independent and if for any $s \in A\setminus I$, the set $I \cup \{s\}$ is not independent. For some set $\mc{S} \subseteq \mc{V}(\ch)$, the \textit{induced subgraph} of $\mc{H}$ corresponding to $\mc{S}$, denoted by $\mc{H[\mc{S}]}$, is the hypergraph with vertex set $\mc{S}$, and edge set $\{E \in \mc{E(\mc{H})} : E \subseteq \mc{S}\}$. A \textit{connected component} of a hypergraph $\mc{H}$ is an induced subgraph corresponding to a maximal subset $S$ of vertices such that for each $v \in S$ there exists some $w \in S$ such that $v$ and $w$ are adjacent. The union of two hypergraphs $\mc{H}_1$ and $\mc{H}_2$, denoted by $\mc{H}_1\cup\mc{H}_2$, is the hypergraph with vertex set $\mc{V}(\mc{H}_1)\cup\mc{V}(\mc{H}_2)$ and edge set $\mc{E}(\mc{H}_1)\cup\mc{E}(\mc{H}_2)$. 

%%%%%%%%%%%%%%%%%%%
\section{System Model for PICOD and hypergraph representation}
\label{sec:systemmodel_and_hypergraph}
We recall the system model of pliable index coding from \cite{brahma2015pliable}. The pliable index coding (PICOD) problem consists of one server with \emph{m} messages from a finite field denoted as $b_i:i\in [m]$, and \emph{n} clients connected to the server by a noise-free broadcast link. The set of indices of the side-information symbols at client $i$ is given by $S_i \subseteq [m]$. Let $R_i = [m]\mathbin{\setminus} S_i$ be referred to as the \textit{request-set} of client $i$. The client $i$ then requires one message from the set of messages $\{b_j : j \in R_i\}$. A pliable index code (or a PICOD scheme) for this problem consists of (a) an encoding scheme at the server which encodes the messages $b_j:j\in[m]$ into $\ell$ symbols from $\mathbb{F}_q$ which the server then broadcasts, and (b) decoding functions at the clients which enable each of them to decode one message with index from its request-set, using the encoded symbols received and its side-information. Here, the quantity $\ell$ is called as the \textit{length} of the PICOD scheme. %Based on the request subsets $\{R_i\}_{i\in [n]}$ of the clients, the server encodes the $m$ messages $ b_1 , b_2,.... b_m$ into $\ell$ encoded messages $ x_1 , x_2,.... x_{\ell} \in  \mathbb{F}_q$ and transmits over the error free broadcast channel, where $\ell$ is called as the \textit{length of the PICOD} problem.
%Each client then decodes a message in its request-set using the received encoded messages and the side-information available with them. All the clients are said to be \textit{satisfied} if each one of them is able to successfully decode at least one new message not in their side-information set. 
 For some encoding scheme, if a client is able to decode one message in its request-set, it is said to be \textit{satisfied}, and if not, it is said to be \text{unsatisfied}, by that encoding scheme. The goal of pliable index coding is to design PICOD schemes that have length $\ell$ as small as possible, while satisfying all clients.  Observe that we can identify client $i$ with its request-set $R_i$, as any two clients with the same request-set are essentially  identical from the PICOD perspective. The above descriptions and notations extend to the PICOD($t$) scenario also, where each client demands $t$ messages, rather than a single message. In this case, we assume that $|R_i|\geq t, \forall i\in[n]$.
%The messages indexed by [$m$], the receivers indexed by [$n$], and the request-sets $\{R_i\}_{i\in[n]}$ together defines a
%($n$, $m$, $\{R_i\}_{i\in[n]}$)-PICOD problem.

We now briefly recollect the hypergraph model for PICOD as given in \cite{krishnan2021pliable}. A given PICOD problem can be completely captured by a hypergraph $\mc{H}$ with the vertex set $\mc{V} = [m]$ and edge set $\mc{E} = \{R_i \subseteq [m]: i \in [n]\}$. Naturally, every hypergraph also defines a corresponding PICOD problem. We thus do not distinguish between hypergraphs and PICOD problems in this work. We denote by $\beta_q(\mc{H})$ the smallest length of any PICOD scheme for $\mc{H}$ over $\mathbb{F}_q$. Also, let $\beta(\mc{H}) \triangleq \min_{q}\beta_q(\mc{H})$ denote the optimal length of any PICOD scheme for $\mc{H}$ over any finite field. For a PICOD($t$) problem, we denote the optimal PICOD length as $\beta^{(t)}(\ch)$.
% When the hypergraph considered is clear from the context, we will denote $\beta_q(\mc{H})$ and $\beta(\mc{H})$ as $\beta_q$ and $\beta$ respectively.

\section{A Hypergraph-approach to constructing PICOD schemes in Polynomial-time}
\label{sec:poly-time-delta-algorithm}
In this section, we present Theorem \ref{thm:upperboundDelta} which is a new upper bound (given by $\Delta(\ch)$) on the optimal length of any PICOD scheme for $\ch$. Based on the technique given by the proof of this theorem, Algorithm \ref{Alg1} constructs a PICOD scheme for a given hypergraph $\mc{H}$ using a careful coloring of the vertices of $\mc{H}$. Algorithm \ref{Alg1} guarantees that PICOD problem on $\mc{H}$ can be solved using at most $\Delta({\mc{H}})$ transmissions. Towards proving Theorem \ref{thm:upperboundDelta} (in Subsection \ref{subsec:proofoftheoremlowerbound}) and presenting Algorithm \ref{Alg1} (in Subsection \ref{subsec:algorithm}), we prove Proposition \ref{prop-1} and a couple of helpful lemmas, which provide us insights on how the properties of the hypergraph $\mc{H}$ influence the achievability of a code for $\mc{H}$. Then, we present Algorithm \ref{Alg1} in Subsection \ref{subsec:algorithm} along with its correctness, and show that it has time complexity polynomial in the system parameters. We also provide some comparisons of our algorithm with existing deterministic algorithms in literature in Subsection \ref{subsec:algorithm}.  %Finally, we also present a class of hypergraphs for which our algorithm provides a better guarantee than the previous algorithm presented in \cite{PolyTime_PICOD}. 

\subsection{Some structural results concerning $\beta_q(\ch)$ and a simple upper bound}
We first observe that any hypergraph can be decomposed into its connected components. In other words, for any hypergraph $\mc{H}$, there exists connected components $\mc{H}_i:i=1,\hdots,t$ of $\mc{H}$ such that their vertex sets are mutually disjoint (and thus, so are their edge sets) and $\mc{H}=\cup_{i=1}^t\mc{H}_i$. We now proceed to give our first result which uses this observation. 
\begin{proposition}
\label{prop-1}
Consider a PICOD problem $\mc{H}$ with connected components $\mc{H}_1$, $\mc{H}_2$,....$\mc{H}_t$. Then, for any prime power $q$,
\begin{equation}
\label{eqn:eq-1}
    \beta_q(\mc{H}) =  \max_{i \in [t]}\beta_q(\mc{H}_i)
\end{equation}
\end{proposition}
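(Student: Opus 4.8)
The plan is to prove the two inequalities $\beta_q(\mathcal{H}) \geq \max_{i\in[t]} \beta_q(\mathcal{H}_i)$ and $\beta_q(\mathcal{H}) \leq \max_{i\in[t]} \beta_q(\mathcal{H}_i)$ separately. Before either direction, I would record the structural fact that makes the decomposition meaningful: every edge (request-set) $R$ of $\mathcal{H}$ lies entirely within the vertex set of a single connected component. This holds because all vertices of $R$ are pairwise adjacent (they all share the edge $R$), so they cannot be split across two components. Consequently each client of $\mathcal{H}$ is a client of exactly one component $\mathcal{H}_i$; and since $R_i \subseteq \mathcal{V}(\mathcal{H}_i)$, every message outside $\mathcal{V}(\mathcal{H}_i)$ lies in that client's side-information $S_i = [m] \setminus R_i$. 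This last point is the bookkeeping that both directions will lean on.

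For the lower bound, I would start from an optimal length-$\ell$ scheme for $\mathcal{H}$, with $\ell = \beta_q(\mathcal{H})$, given by encoding functions $f_1,\ldots,f_\ell$ of $(b_1,\ldots,b_m)$. Fix a component $\mathcal{H}_k$ and specialize the scheme by setting every message outside $\mathcal{V}(\mathcal{H}_k)$ to $0$. Each $f_j$ then becomes a function of the messages in $\mathcal{V}(\mathcal{H}_k)$ alone, yielding $\ell$ candidate transmissions for the problem $\mathcal{H}_k$. Any client of $\mathcal{H}_k$ had all of the now-fixed messages in its side-information, so its original decoder still applies to the specialized transmissions and still returns a message in its request-set. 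Hence the specialized transmissions form a valid scheme for $\mathcal{H}_k$, giving $\beta_q(\mathcal{H}_k) \leq \ell$; taking the maximum over $k$ yields $\max_{k} \beta_q(\mathcal{H}_k) \leq \beta_q(\mathcal{H})$.

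For the upper bound, set $L = \max_{i} \beta_q(\mathcal{H}_i)$ and, for each $i$, fix an optimal scheme $t_{i,1},\ldots,t_{i,\beta_q(\mathcal{H}_i)}$ for $\mathcal{H}_i$, whose transmissions depend only on the messages in $\mathcal{V}(\mathcal{H}_i)$, padding each with zero-valued transmissions up to length $L$. I would then superimpose the component schemes by summing over $\mathbb{F}_q$: the $j$-th transmission of the combined scheme is $\sum_{i=1}^{t} t_{i,j}$, for $j\in[L]$. A client in component $k$ knows every message outside $\mathcal{V}(\mathcal{H}_k)$, so it can compute and subtract $\sum_{i\neq k} t_{i,j}$ from the $j$-th received symbol, recovering $t_{k,j}$ for every $j$; applying the scheme for $\mathcal{H}_k$ then satisfies it. This produces a valid length-$L$ scheme for $\mathcal{H}$, so $\beta_q(\mathcal{H}) \leq \max_{i} \beta_q(\mathcal{H}_i)$, and the two bounds give the claimed equality.

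I expect the main obstacle to be handling encodings that are allowed to be arbitrary (possibly nonlinear), since the superposition step relies on subtraction. The point to verify carefully is that the \emph{combination} across components is performed via addition in $\mathbb{F}_q$ (so that a client can always subtract the fully known contributions of the other components), while the individual component encoders $t_{i,j}$ may remain arbitrary functions of their own component's messages. The secondary subtlety, used in both directions, is exactly the observation above that each client's side-information contains all messages of the other components, which legitimizes both the specialization (lower bound) and the cancellation (upper bound).
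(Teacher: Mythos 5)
Your proof is correct and follows essentially the same route as the paper: the upper bound is identical (zero-pad each component's optimal scheme, superimpose by summation over $\mathbb{F}_q$, and let each client cancel the other components' contributions using its side-information). Your lower-bound direction, via setting all messages outside $\mathcal{V}(\mathcal{H}_k)$ to zero and reusing the original decoders, is in fact a more rigorous rendering of what the paper dismisses with ``clearly,'' so no gap remains.
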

\renewcommand\qedsymbol{$\blacksquare$}
\begin{proof}
Clearly, we have that $\beta(\mc{H})\geq \beta(\mc{H}_i)$ for any $i\in[t]$, as the vertex sets of $\mc{H}_i:i\in[t]$ are disjoint and since any PICOD scheme for $\mc{H}$ must also satisfy the clients in $\mc{H}_i$. We now prove that $\beta(\mc{H}_i)\leq \max_{i \in [t]}\beta_q(\mc{H}_i)$.

Let $(x_{1,1}, x_{1,2},\hdots, x_{1,\beta_q(\mc{H}_i)})\in \mathbb{F}_q^{\beta_q(\mc{H}_i)}:i\in[t]$ be the array of transmissions made by the server using the optimal PICOD scheme for each connected components $\mc{H}_i:i\in[t]$, respectively. Now, we pad these arrays with zeros (from ${\mathbb F}_q$) to make all of them the same length $\max_{i\in[t]}\beta_q(\mc{H}_i)$. Let the resulting zero-padded arrays be referred to as $X_i:i\in[t]$. Derive a single new array of transmissions $X = \sum_{i \in [t]} X_i$. These $\max_{i\in[t]}\beta_q(\mc{H}_i)$ transmissions comprising $X$ will clearly satisfy all the clients in $\mc{H}$ because (a) each $X_i$ satisfies clients in connected component $\mc{H}_i$, and (b) the vertices (messages) in $\mc{H}_i:i\in[t]$ are disjoint (i.e., any client in $\mc{H}_i$ has all the messages in $\mc{H}_j:j\neq i$ as side-information). Thus, we see that $\beta(\mc{H}_i)\leq \max_{i \in [t]}\beta_q(\mc{H}_i)$. This concludes the proof. 
\end{proof}
Proposition \ref{prop-1} enables us to focus on connected components of the PICOD hypergraph when presenting achievability or converse results. We now present two lemmas that bring out some relationships between the hypergraph topology and the achievability of PICOD schemes for a PICOD hypergraph $\mc{H}$.
%%%%%%
The following simple lemma captures the essential characteristic of a `greedy' approach to constructing a PICOD scheme for $\ch$. This will be used to show our Theorem \ref{thm:upperboundDelta}. % regarding an upper bound on $\beta_q(\ch)$ and the constructive proof of a PICOD scheme following this result in Algorithm \ref{Algorithm 1}. 
\begin{lemma}
\label{Lemma 2}
For an hypergraph $\ch$ with vertex set $\mc{V}$, let $I$ be an independent set of $\ch$. Then, $\beta_q(\mc{H}) \leq 1 + \beta_q(\mc{H}[\mc{V}\setminus I])$.
\end{lemma}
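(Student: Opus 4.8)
The plan is to give a direct constructive argument: I will exhibit an explicit PICOD scheme for $\mc{H}$ over $\mathbb{F}_q$ whose length is exactly one more than the optimal length for the induced subgraph $\mc{H}[\mc{V}\setminus I]$. The single extra transmission is designed to simultaneously satisfy every client whose request-set meets $I$, exploiting the independence of $I$, while the remaining clients are handled by invoking an optimal scheme for the smaller problem. To set this up, I would first partition the clients (equivalently, the edges of $\mc{H}$) into two groups: those whose request-set $R_i$ intersects $I$, and those for which $R_i\subseteq\mc{V}\setminus I$. By the definition of the induced subgraph, the second group is precisely the edge set of $\mc{H}[\mc{V}\setminus I]$, so the two groups together exhaust all clients.

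The key observation is that, since $I$ is independent, no edge contains two vertices of $I$, so every request-set $R_i$ contains \emph{at most} one vertex of $I$. Hence for a client $i$ in the first group, $R_i\cap I=\{v\}$ for a unique $v\in I$, and every other vertex of $I$ lies in its side-information $S_i=[m]\setminus R_i$. I would then let the server make the single transmission $\sum_{u\in I}b_u$. Client $i$ knows $b_u$ for all $u\in I\setminus\{v\}$, so it can cancel these terms from the received symbol and recover $b_v$; since $v\in R_i$, this satisfies client $i$. Thus one transmission satisfies \emph{all} clients in the first group at once.

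For the second group, I would append the transmissions of an optimal length-$\beta_q(\mc{H}[\mc{V}\setminus I])$ scheme for $\mc{H}[\mc{V}\setminus I]$, which uses only messages indexed in $\mc{V}\setminus I$. Each such client has $R_i\subseteq\mc{V}\setminus I$, and in the original problem it possesses at least as much relevant side-information as in the subproblem (it additionally holds all messages in $I$), so the subproblem decoder continues to work unchanged. Concatenating the one transmission with these $\beta_q(\mc{H}[\mc{V}\setminus I])$ transmissions yields a valid scheme for $\mc{H}$ of length $1+\beta_q(\mc{H}[\mc{V}\setminus I])$, which gives the claimed bound.

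I do not expect a serious obstacle in this argument; the two points that require care are verifying that the two client groups together cover every client, and checking that the additional side-information available in the original problem never invalidates the subproblem decoder. Both follow immediately from the definitions of independent set and induced subgraph, so the main content is simply recognizing that the sum $\sum_{u\in I}b_u$ is the right single transmission to cancel out using independence.
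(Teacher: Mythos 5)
Your proof is correct and follows essentially the same approach as the paper: transmit $\sum_{u\in I}b_u$, use independence of $I$ to show every client meeting $I$ can decode, observe the remaining clients are exactly the edges of $\mc{H}[\mc{V}\setminus I]$, and append an optimal scheme for that subproblem.
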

\renewcommand\qedsymbol{$\blacksquare$}
\begin{proof}
For the independent set $I$, let the  server make one broadcast transmission of the symbol $x$ derived as follows,
\begin{equation}
\label{eq:eq-2}
x = \sum_{i\in I} b_i.
\end{equation}
This transmission satisfies all the clients indexed by $j \in [n]$ such that $R_j \cap I \neq \emptyset$, since $I$ is an independent set, which means $\lvert R_j \cap I \rvert = 1$ if $R_j \cap I \neq \emptyset$. Thus the only unsatisfied clients are the ones indexed by $\{j : R_j \cap I = \emptyset\} = \{j \in [n] : R_j \subseteq \mc{V}\setminus I\}$. These $\{R_j:\forall j ~\text{such that}~ R_j\cap I=\emptyset\}$ are precisely the edges of $\mc{H} [\mc{V}\setminus I]$. Using the optimal code for $\mc{H} [\mc{V}\setminus I]$, we get our result.
\end{proof}
The following observation will be used later in this work. 
\begin{observation}
\label{obs-1}
We observe that the degree of the vertices in $\mc{V}\setminus I$ that are adjacent to atleast one vertex in the set $I$ will decrease by atleast one in the induced hypergraph $\mc{H} [\mc{V}\setminus I]$.
\end{observation}
%%%%
%We now show that $\beta_q(\ch)$ is upper bounded by $\Delta(\ch)$, and present a corresponding achievable algorithm. As the proof of the upper bound uses induction on $\Delta(\ch)$, the following result is useful. 
%%%%%
We now have the following lemma, which we will use in the proof of Theorem \ref{thm:upperboundDelta}. 
\begin{lemma}
\label{Lemma 1}
If $\D(\mc{H}) = 1$, then $\beta_q(\mc{H}) = 1$.
\end{lemma}
\renewcommand\qedsymbol{$\blacksquare$}
\begin{proof}
\textit{Converse}: Observe that for $\D(\mc{H}) = 1$, there is at least one non-empty $R_j$, $j \in [n]$. Thus, at least one transmission is required to satisfy this client, which means $\beta_q(\ch) \geq 1$. 

  \textit{Achievability}: If a hypergraph $\mc{H}$ has $\D(\mc{H}) = 1$, it implies that its vertices either have a degree equal to zero (i.e., not part of request-set of any client) or equal to one. From Proposition \ref{prop-1}, we may assume without loss of generality that  only vertices with degree one are in $\mc{H}$, as this does not affect the calculation of $\beta_q(\mc{H})$. The edges $\mc{E}(\mc{H})$ in such a hypergraph are such that for any two edges $R_i$, $R_j$,  $i, j \in [n]$ and $i \neq j$, $R_i \cap R_j = \emptyset$. Thus, we can find a maximal independent set $I = \{v_1, v_2,...v_n\}$, where $v_i \in R_i$ for $i\in [n]$. Now, from the definition of maximal independent set, the server can make one broadcast transmission given by $\sum_{i \in I} b_{v_i}$. Each client $i$ can decode the message $b_{v_i}\in R_i$ from the above transmission, as all other messages in the sum are in its side-information set. Thus, $\beta_q(\ch)\leq 1$. This completes the proof.
\end{proof}
%%%%
We now give our upper bound on $\beta_q(\ch)$, which is one of the main results in this work. 
\begin{theorem}
\label{thm:upperboundDelta}
 $\beta_q(\mc{H}) \leq  \D(\mc{H})$.
\end{theorem}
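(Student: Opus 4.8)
The plan is to prove the bound by induction on the maximum degree $\Delta(\mc{H})$, using Lemma \ref{Lemma 2} as the inductive engine and Lemma \ref{Lemma 1} as the base case. The base case is immediate: when $\Delta(\mc{H})=1$, Lemma \ref{Lemma 1} gives $\beta_q(\mc{H})=1=\Delta(\mc{H})$ (and the degenerate case $\Delta(\mc{H})=0$, i.e.\ a hypergraph with no edges, forces $\beta_q(\mc{H})=0$ trivially). So the interesting work lies entirely in the inductive step.

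For the inductive step, I would assume the claim holds for every hypergraph of maximum degree at most $\Delta(\mc{H})-1$. Given $\mc{H}$ with vertex set $\mc{V}$, I would select a \emph{maximal} independent set $I$ of $\mc{H}$ and invoke Lemma \ref{Lemma 2} to obtain $\beta_q(\mc{H}) \le 1 + \beta_q(\mc{H}[\mc{V}\setminus I])$. To close the induction it then suffices to show that passing to the induced subgraph strictly decreases the maximum degree, i.e.\ $\Delta(\mc{H}[\mc{V}\setminus I]) \le \Delta(\mc{H})-1$; combined with the inductive hypothesis this yields $\beta_q(\mc{H}) \le 1 + (\Delta(\mc{H})-1) = \Delta(\mc{H})$, as desired.

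The heart of the argument — and the step I expect to require the most care — is establishing this strict drop in the maximum degree, which is precisely where the \emph{maximality} of $I$ (rather than mere independence) is essential. The key point is that because $I$ is maximal, every vertex $v \in \mc{V}\setminus I$ is adjacent to at least one vertex of $I$: otherwise $I\cup\{v\}$ would still be independent, contradicting maximality. By Observation \ref{obs-1}, each such vertex therefore loses at least one incident edge when we restrict to $\mc{H}[\mc{V}\setminus I]$. Consequently, any vertex that attained degree $\Delta(\mc{H})$ in $\mc{H}$ either lies in $I$ (and is deleted) or survives in $\mc{H}[\mc{V}\setminus I]$ with degree at most $\Delta(\mc{H})-1$. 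Since removing vertices can never increase any remaining vertex's degree, no vertex of $\mc{H}[\mc{V}\setminus I]$ can have degree exceeding $\Delta(\mc{H})-1$, which gives exactly the bound I need.

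A couple of subtleties are worth spelling out when writing this cleanly. Isolated (degree-zero) vertices are forced into any maximal independent set, so $\mc{V}\setminus I$ contains no isolated vertices; this is harmless and in fact simplifies the degree count above. Moreover, the induced subgraph $\mc{H}[\mc{V}\setminus I]$ may be disconnected, but this causes no difficulty, since the inductive hypothesis is stated for arbitrary hypergraphs of smaller maximum degree (and Proposition \ref{prop-1} can be invoked if one prefers to reason component-by-component). Thus the induction runs through without further complication.
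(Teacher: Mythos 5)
Your proof is correct and takes essentially the same route as the paper: induction on $\D(\mc{H})$ with Lemma \ref{Lemma 1} as the base case, one application of Lemma \ref{Lemma 2} per inductive step, and maximality of the chosen independent set combined with Observation \ref{obs-1} to force $\D(\mc{H}[\mc{V}\setminus I])\leq \D(\mc{H})-1$. The only difference is cosmetic: the paper takes a maximal independent set inside the set $\mc{S}$ of maximum-degree vertices rather than in all of $\mc{V}$, and either choice guarantees that every surviving vertex of degree $\D(\mc{H})$ loses at least one incident edge, so the induction closes identically.
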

\renewcommand\qedsymbol{$\blacksquare$}
\begin{proof}
Let $\mc{V}(\ch)=\mc{V}$. We prove this by induction on $\Delta(\ch)$. For $\D(\mc{H}) =1$, the claim holds by Lemma \ref{Lemma 1}. This is our base case. Now, for any integer $\D > 1$, assume that for any hypergraph $\ch$ with maximum degree $\D - 1$, it holds that $\beta(\ch)\leq \D-1$. We will now show that the claim holds for any $\mc{H}$ with  $\D(\ch)=\D$. 

Denote $\mc{S}=\{v : \D_v = \D\}$. We take a maximal independent set of this set $\mc{S}$, and call it $I_\D$. Now, we partition the vertices $\mc{V}\setminus I_\D$ into two groups, $\mc{S} \setminus I_\D$ and $\mc{V}\setminus \mc{S}$. Observe that the following are true in $\ch[\mc{V}\setminus I_\D]$.
\begin{itemize}
    \item We claim that the degree of any vertex in $\mc{S} \setminus I_\D$ is at most $\D-1$ in $\ch[\mc{V}\setminus I_\D]$. To see this, we note the following. Firstly, any vertex in $\mc{S} \setminus I_\D$ is adjacent to at least one vertex in $I_\D$ (by the maximality of $I_\D$). By Observation \ref{obs-1}, the degree of this vertex will reduce by at least one from $\D$, in $\ch[\mc{V}\setminus I_\D]$. Thus, the claim follows. 
    \item The degree of all vertices in $\mc{V}\setminus \mc{S}$ is at most $\D-1$, as it was so in $\ch$ itself, by choice of $\mc{S}$. 
\end{itemize}
Therefore we can say that the maximum degree of the hypergraph $\ch[\mc{V}\setminus I_\D]$ is at most $\D-1$ after this transmission. According to the induction assumption, we have that $\beta_q(\ch[\mc{V}\setminus I_\D])\leq \D-1$.  As $I_\D$ is by definition an independent set of $\ch$, we can apply Lemma \ref{Lemma 2} to this. This gives
$\beta_q(\mc{H}) \leq 1 + \beta_q(\mc{H} [\mc{V}\setminus I_\D])\leq 1+(\D-1)=\D$. This completes the proof. 
\end{proof}
%%%%%%
\subsection{A polynomial complexity scheme that achieves the length bound in Theorem \ref{thm:upperboundDelta}}
%%%
\label{subsec:algorithm} In this subsection, we present Algorithm \ref{Alg1}, which takes a PICOD hypergraph $\mc{H}$ with the largest degree $\D(\mc{H})$ as input and constructs the transmissions of a PICOD scheme of length at most $\D(\ch)$, using a vertex-coloring approach. We also discuss the time-complexity of this algorithm, which turns out to be polynomial in the system parameters. Further, we compare our algorithm with existing deterministic algorithms available in literature. 
%Essentially, the algorithm identifies via a greedy-technique if there is a maximal independent set of vertices from the set of vertices which have degree $\delta$, where $\delta$ is decremented from $\Delta$ to $1$. A specific color, color-$c$ (identified by the variable $c$) is assigned to the vertices in the independent set. This assignment function is denoted by $C(v):v\in{\mc{V}}$. Finally, for each color assigned, a single transmission is made, which is the sum of all messages whose corresponding vertices have got that color. Decoding of such transmissions follows, as in Lemma \ref{Lemma 1}. Since the number of colors is $\Delta(\ch)$, at most $\Delta(\ch)$ transmissions are required. 
%%%%%

Now, we discuss the flow of the algorithm and check its correctness. Algorithm \ref{Alg1} works on a temporary hypergraph $\mc{\ch'}$, initialized with the given input PICOD hypergraph $\mc{\ch}$ (line 1). The algorithm considers (in line 1) a set of $\Delta(\ch)$ colors denoted by $\{\text{color-}1,\hdots,\text{color-}\D(\ch)\}$. In line 1, the variable $c$ initialized to $1$ refers to the index of the color. The loop from lines 4-18 of Algorithm \ref{Alg1} is devoted to finding a maximal independent set (in a greedy fashion) in the set of vertices with degree $\delta$, where $\delta$ is decremented from $\D$ to $1$ (line 2). The existence of such an independent set is kept track of by the variable $Empty$, which is set as $TRUE$ initially (in line 3) and updated to $FALSE$ (in line 7) if at least one vertex with degree $\delta$ is found (and thus included in the independent set) in $\mc{H}'$. As the independent set is being constructed one vertex at a time, the color with index $c$ is assigned to these vertices (line 6). This assignment function is denoted by $C(v)$. The graph $\ch'$ is updated (in lines 8-16) in the following way after adding any vertex $v$ to the independent set (line 6).  
\begin{itemize}
\item  $v$ is removed from the vertex set $\mc{V(\mc{H^\prime})}$ of  $\mc{H^\prime}$ (line 8).
\item Degrees of all other vertices present in any edge containing $v$ are then decreased by one (lines 9-13). 
\item All the edges of $\mc{H^\prime}$ that include vertex $v$ are removed from $\mc{E}(\ch')$ (lines 9,14) (as these will be satisfied by the transmission in line 20 as argued below). 
\end{itemize}
Because of the steps given above, when another vertex (following line 4) is picked to be given color-$c$, it will not be adjacent with any vertex $v$ picked before (as all edges containing $v$ would have been removed). Thus, these steps construct an independent set of vertices with degree $\delta$ in $\ch'$. The maximality of this independent set is ensured by the loop starting with line 4, as we run through all possible vertices in $\ch'$. The sum of the messages corresponding to the vertices in the independent set (identified by color $c$) so constructed is then transmitted (Line 19-20). Note that this transmission satisfies any client in whose request-set these message-vertices are present. This is true, since each such client is missing precisely one message in the independent set constructed (indeed, this is exactly why the set of vertices are	
independent). Note that the color index $c$ is updated (line 21) to keep track of the new (potentially non-empty) independent set corresponding to the vertices with the next $\delta$ value. 

The entire loop is iterated, with the updated graph $\ch'$, until $\mc{E}(\ch')$ becomes empty (checked in lines 23-25), which will eventually happen as $\ch$ has finitely many edges. This also means that each edge $R\in\mc{E}(\ch)$ is removed at some point in line 14, which means there is some vertex in $R$ that is picked as part of some independent set constructed by the algorithm. By the transmission in line 20 corresponding to this independent set, the client $R$ is thus satisfied. Thus, every client is satisfied by Algorithm \ref{Alg1}. Finally, the algorithm stops when $\mc{E}(\ch')$ is empty (line 23).  Observe that the final value of $c$ is the number of transmissions. Since $c\leq \Delta(\ch)$ (by lines 2 and 25), Algorithm \ref{Alg1} provides an achievable scheme of length at most $\D(\ch)$, thus giving a constructive proof for Theorem \ref{thm:upperboundDelta}. 
\begin{algorithm}
    \DontPrintSemicolon
    \KwInput{PICOD Hypergraph $\mc{H} = (\mc{V}, \mc{E})$ with maximum degree $\D(\mc{H})$.}
    %\KwOutput{PICOD scheme for $\ch$}
    \textbf{Initialize:} $\mc{H}^\prime=\mc{H}, c=1$, Colors $\{\text{color-}1,\hdots,\text{color-}\D(\ch)\}.$\;
    \For{$\de=\D(\ch):1$}{$Empty = TRUE$\;
        \For{$v\in\mc{V}(\ch')$}{
            \If{$\D_{v} = \de$}{
            $C(v)=$color-$c$. \;
            $Empty=FALSE$\;
                  $\mc{V(\mc{H^\prime})} \gets \mc{V(\mc{H^\prime})} \setminus v$.\;
                  \For{$R\in \mc{E}(\ch')$}{
                    \If{$v\in R$}{
                    \For{$j \in R\setminus v$}{
                     $\D_j=\D_j -1$.\;
                        }
                    $\mc{E(\mc{H^\prime})} \gets \mc{E(\mc{H^\prime})} \setminus R$.\;      
                    }
                }
              
            }
        }
        \If{$Empty = FALSE$}
        {
        Server transmits $\sum\limits_{\stackrel{v\in\mc{V}(\ch'):}{C(v)=\text{color-}c}}\hspace{-0.4cm}b_{v}$\;
        $c=c+1$.\;
        }
        
        \If{$|\mc{E}(\ch')| = 0$}
        {$break$.\;}
    }
    \caption{Construction of a PICOD scheme for PICOD hypergraph $\ch$ with maximum degree $\D(\mc{H})$} \label{Alg1}
    \label{Algorithm 1}
    \end{algorithm}
\begin{remark} Algorithm \ref{Alg1} can also be understood as constructing a PICOD scheme for $\ch$ via a conflict-free coloring for $\ch$. We refer the reader to \cite{krishnan2021pliable} for more details on this connection. 
\end{remark}
%After the execution of the algorithm, all the vertices of the hypergraph $\mc{H}$ which are colored will get one of the $\D(\mc{H})$ colors and the corresponding messages are the ones that participate in transmissions. Messages corresponding to vertices that are assigned with same color index are linearly combined to form one transmission. For each color index we make one transmission. Thus, making the total number of transmissions to be at most $\D$.

\subsubsection*{Complexity of Algorithm \ref{Alg1}}
For a hypergraph $\ch$ with $m$ vertices and $n$ edges, it can be observed from the design of our algorithm that it runs in polynomial time $\mc{O}(\D(\ch) m^2 n)$. This is because of the following reasons. There are at most $\D(\ch)$ values $\de$ can take, so $\mc{O}(\D(\ch))$ iterations are needed for the outermost \textbf{for} loop. Similarly, the second \textbf{for} loop iterates for $\mc{O}(m)$ rounds. The third \textbf{for} loop iterates for at most $\mc{O}(n)$ rounds and for the innermost \textbf{for} loop the size of each edge can at most be $m$, so we need $\mc{O}(m)$ rounds.
%%%%
\subsubsection*{Comparison of Algorithm \ref{Alg1} with known deterministic algorithms for PICOD}
In \cite{brahma2015pliable} (see GRCOV algorithm in Section VII of \cite{brahma2015pliable}), a polynomial-time greedy algorithm for PICOD was presented, which identified an exhaustive collection of subsets of clients, each subset of which can be satisfied using one transmission. Note that this is equivalent to finding independent sets of messages in our PICOD hypergraph. Therefore, our algorithm, which finds independent subsets of messages, is similar to the GRCOV algorithm in this sense. However, there is the important distinction that Algorithm \ref{Alg1} searches for the independent sets among vertices \textit{in the descending order of their degree}, whereas GRCOV keeps searching for new vertices to enlarge the independent set, independent of the degree. This key difference enables us to bound the number of transmissions using the maximum degree $\D(\ch)$, whereas the performance of the GRCOV algorithm is not calculated in \cite{brahma2015pliable} analytically. 

In \cite{PolyTime_PICOD}, a polynomial-time algorithm for PICOD is presented (Section IV of \cite{PolyTime_PICOD}), which takes an iterative approach. In each iteration, the messages (remaining) in the system are partitioned into groups based on their effective degrees (which are closely related to their degrees), and then greedily assigned coding vectors such that as many clients as possible are satisfied at each such assignment. It is shown in \cite{PolyTime_PICOD} that, this algorithm needs at most $c\log^2n$ transmissions to satisfy all clients, for some constant $c>1$ (Theorem 1 of \cite{PolyTime_PICOD}). From a more careful analysis of the algorithm, it is fairly easy to see that this worst-case guarantee can be brought down to $c\log(\D(\ch))\log n$ transmissions. An exact comparison with this algorithm seems difficult; however, the worst-case guarantee $c\log(\D(\ch))\log n$ of the algorithm in \cite{PolyTime_PICOD}  can be larger than $\D(\ch)$ (which is the worst-case guarantee of Algorithm \ref{Alg1} in the present work), as it is not difficult to present hypergraphs for which $n$ (the number of edges) is large, whereas $\D(\ch)$ is small. In this sense, our Algorithm provides another facet to achievable PICOD schemes which is different from existing deterministic algorithms.
% \begin{example}
% In this example, we present a class of graphs for which Consider a PICOD problem with maximum degree $\D(\mc{H})$, $n$ hyperedges, and $m = \binom{n}{1} + \binom{n}{2} + ...\binom{n}{\D(\mc{H})}$ vertices. In this sum $\binom{n}{i}$ term corresponds to maximum number of \textit{unique vertices} with degree $i$, where a message is \textit{unique} if there is no other vertex with same set of edges incident on it. the bound in \cite{song2017polynomial} is $\frac{2 \log^2n}{log 1.5}$ and our bound is $\D(\mc{H})$. For constant $\D(\mc{H})$, $\frac{2 \log^2n}{log 1.5}$ increases with increase in $n$ but our bound remains the same. For instance with $\D(\mc{H}) = 2$, $n = 3$, we have $m = 6$. Upper bound from Theorem \ref{thm:upperboundDelta} is 2, where as from \cite{song2017polynomial} is approximately 8. With $\D(\mc{H}) = 2$, $n = 4$, we have $m = 10$. Our upper bound remains the same because $\D(\mc{H})$ did not change, but their result says that the bound is approximately 14. 
% \end{example}
\section{A new lower bound on $\beta(\ch)$ based on nested hyperedges}
\label{sec:lowerbound_via_nesting}
Having discussed achievability of PICOD, we now turn our attention to converse arguments. In this section, we identify a substructure of the PICOD hypergraph called a \textit{nested collection} of hyperedges and obtain a lower bound for PICOD based on this substructure. Using this lower bound, we also present some classes of problems for which $\beta(\ch)=\D(\ch)$. Using our present results and based on existing results in index coding literature, we characterize the optimal lengths of all PICOD problems with $\D(\ch)\in\{1,2,3\}$. 
%%%%
\subsection{A nested collection of hyperedges of $\ch$} 
\label{subsec:nesting}
\begin{definition}
\label{defn:nestingcollection}
Consider a hypergraph $\mc{H}(\mc{V}, \mc{E})$. A collection of $L$ subsets of hyperedges of $\ch$ given by $\{ \mc{E}_i\subseteq \mc{E}: i\in[L]\}$ is said to be a \textit{nested collection} of hyperedges of $\ch$ if it satisfies the following properties 
\begin{itemize}
    \item The set of edges $\mc{E}_i:i\in[L]$ has cardinality $\lvert \mc{E}_i \rvert = 2^{i-1}$, respectively. 
    \item For any $R\in \mc{E}_i:i\in[L-1]$ there exists a pair of non-empty edges $R'$ and $R''$ in $\mc{E}_{i+1}$ such that $R', R'' \subset R$ and $R'\cap R''=\emptyset$.
\end{itemize}
 We refer to the set $\mc{E}_i$ as the \textit{level-$i$} edges of the nested collection. We call the number of levels $L$ in such a collection as its \textit{nesting length}. We define the maximum nesting length of any nested collection in a hypergraph $\mc{H}$ as the \textit{nesting number of $\ch$}, and denote it by $\eta({\mc{H}})$.
\end{definition}
%%%
In this section, we will prove the following result. 
\begin{theorem}
\label{thm:lowerboundeta}
  Let $\mc{H}$ be a PICOD hypergraph. Then, $\beta(\mc{H}) \geq \eta(\mc{H})$.
\end{theorem}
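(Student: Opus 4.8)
The plan is to prove the lower bound $\beta(\mc{H})\geq \eta(\mc{H})$ by reducing it to a known converse for \emph{index coding}, exactly as the excerpt hints (the proof ``depends on a known lower bound for index coding from \cite{neely2013dynamic}''). The key conceptual point is that a nested collection of hyperedges encodes a sequence of clients whose decoding requirements are ``chained'' in a way that forces many independent transmissions. Recall that in index coding each client demands a \emph{specific} message, and there are strong information-theoretic lower bounds (e.g. based on acyclic induced subgraphs / the MAIS bound, or the generalized independence-number type bounds). Since PICOD is a relaxation of index coding (the client may choose any message in its request-set), a converse for a well-chosen index coding instance that is ``dominated'' by the PICOD instance will lower-bound $\beta(\mc{H})$.

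\textbf{Step 1: Extract a witness nested collection.} First I would fix a nested collection $\{\mc{E}_i:i\in[L]\}$ with $L=\eta(\mc{H})$ achieving the maximum nesting length, where $|\mc{E}_i|=2^{i-1}$, and each $R\in\mc{E}_i$ ($i<L$) contains two disjoint non-empty children $R',R''\in\mc{E}_{i+1}$. This gives a complete binary tree of hyperedges of depth $L$: the single edge in $\mc{E}_1$ is the root, and following the disjoint-children relation yields $2^{L-1}$ leaf edges in $\mc{E}_L$, with the crucial property that along any root-to-leaf path the edges are strictly nested ($R\supset R'\supset\cdots$) and siblings are disjoint.

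\textbf{Step 2: Select decoding representatives and build an index-coding instance.} The idea is to pick, for each internal node $R\in\mc{E}_i$, a message $v_R$ that lies in $R$ but \emph{not} in either of its two children $R',R''$ (such a vertex exists because $R'\cup R''\subsetneq R$ is forced by disjointness together with the fact that $R$ must contain an element outside both children — I would verify this containment is strict, using that each level doubles while staying inside the parent). Associate to $R$ an index-coding client whose side information is everything except the relevant messages, and whose demanded message is $v_R$. Because the children are disjoint and strictly inside the parent, the demand/side-information pattern across a root-to-leaf path forms an \emph{acyclic} (or appropriately structured) configuration on $L$ distinct demanded messages. In any PICOD scheme for $\mc{H}$, each of these clients (being genuine edges of $\mc{H}$) must be satisfied by decoding \emph{some} message in its request-set; the nesting structure guarantees that the messages they can possibly decode along one path are organized so that no single transmission can simultaneously resolve two different levels of the chain.

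\textbf{Step 3: Apply the index-coding converse and count.} I would then invoke the lower bound of \cite{neely2013dynamic}: the minimum index-coding length is at least the size of the largest acyclic induced subgraph (MAIS) of the side-information digraph. The nested tree provides a root-to-leaf path of length $L$ that induces such an acyclic structure, yielding $\beta_q(\mc{H})\ge \beta_{\text{IC}}\ge L = \eta(\mc{H})$ for every $q$, hence $\beta(\mc{H})\ge\eta(\mc{H})$. The main obstacle — and the step I expect to be most delicate — is the \emph{reduction itself}: carefully arguing that the freedom of a PICOD client (to decode \emph{any} of its requested messages rather than a fixed one) does not let the scheme escape the index-coding lower bound. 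Concretely, I must show that the nested/disjoint structure pins down, for the purpose of the converse, a consistent assignment of ``effective demands'' so that the chosen index-coding instance is genuinely no harder than the PICOD instance; this is where the disjointness of siblings and strict nesting along a path are used essentially, and where one must be precise about which messages are forced to be unknown to which clients so that the acyclic-subgraph bound applies cleanly.
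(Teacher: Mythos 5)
Your overall strategy---use the nested collection to exhibit a decoding chain and invoke the acyclic-induced-subgraph bound of \cite{neely2013dynamic}---matches the paper's, but your reduction runs in the wrong direction, and this is a genuine gap rather than a delicate step left open. In Step 2 you fix the demands in advance (client $R$ demands a pre-selected $v_R$) and then lower-bound the resulting index coding instance. However, the link between PICOD and index coding is $\beta(\mc{H})=\min_{D}\beta(\mc{H}_D)$, where the minimum runs over \emph{all} assignments $D$ of decoding choices (Lemma 1 of \cite{ong2019optimal}); a PICOD scheme is under no obligation to satisfy your clients with your chosen messages. Fixing one assignment $D_0$ only gives $\beta(\mc{H})\leq\beta(\mc{H}_{D_0})$, so proving $\beta(\mc{H}_{D_0})\geq L$ proves nothing about $\beta(\mc{H})$. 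To lower-bound the minimum you must exhibit an acyclic induced subgraph with $L$ messages in the bipartite graph of $\mc{H}_D$ for \emph{every} $D$, and the subgraph must therefore be constructed adaptively as a function of $D$. This is exactly what the paper does: given arbitrary decoding choices, start at the root $R_1$ with decoded message $d_1$; its two children in $\mc{E}_2$ are disjoint, so at least one of them, say $R_2$, has $d_1$ in its side information, forcing $d_2\neq d_1$; the children of $R_2$ are subsets of $R_2$ (so they automatically avoid $d_1$) and are mutually disjoint (so at least one avoids $d_2$); continuing down the tree yields clients $R_1,\hdots,R_L$ whose decoded messages satisfy $d_i\in\bigcap_{j\leq i}R_j\setminus\bigcup_{k>i}R_k$, i.e., $L$ distinct messages forming an acyclic configuration, uniformly in $D$. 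Your fixed-demand tree cannot be repaired into this, because the scheme's actual decoding choices need not coincide with any of your $v_R$'s.

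A secondary but concrete error: the vertex $v_R\in R\setminus(R'\cup R'')$ that your construction relies on need not exist. Definition \ref{defn:nestingcollection} only requires the two children to be non-empty, disjoint subsets of $R$; nothing rules out $R'\cup R''=R$ (take $R=\{1,2\}$, $R'=\{1\}$, $R''=\{2\}$, a legitimate nested collection of length $2$), so your promised verification that the containment is strict fails in general. In the adaptive argument this issue never arises, because the messages playing the role of demands are the scheme's own decoding choices $d_i\in R_i$, not pre-selected vertices lying outside both children.
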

The proof of Theorem \ref{thm:lowerboundeta} is based on the maximum acyclic induced subgraph bound for index coding, obtained in \cite{neely2013dynamic}. Before presenting the proof in Subsection \ref{subsec:proofoftheoremlowerbound}, we present two examples to illustrate the quantity $\eta(\ch)$ as well as the lower bound in Theorem \ref{thm:lowerboundeta}.
%%%
\begin{example}
\label{example1} 
Consider a PICOD hypergraph with vertices indexed by $[12]$, and $12$ edges given by $R_1 = \{1, 2, 3, 4, 5, 6, 8, 10, 12\}$, $R_2 = \{1, 2, 4, 8\}$, $R_3 = \{3, 5, 6, 10\}$, $R_4 = \{1, 2, 8\}$, $R_5 = \{4\}$, $R_6 = \{5, 10\}$, $R_7 = \{3\}$, $R_8 = \{1, 7, 9\}$, $R_9 = \{3, 9\}$, $R_{10} = \{11\}$, $R_{11} = \{1, 11, 12\}$, $R_{12} = \{2, 11\}$.  Fig. \ref{fig:fig 1} shows a subgraph of $\ch$ consisting of a nested collection of hyperedges with nesting length  $L=3$.  %Where, $R_1 = \{1, 2, 3, 4, 5, 6, 8, 10, 12 \}$, $R_2 = \{1, 2, 4, 8\}$, $R_3 = \{3, 5, 6, 10\}$, $R_4 = \{1, 2, 8\}$, $R_5 = \{4\}$, $R_6 = \{5, 10\}$, $R_7 = \{3\}$, $R_8 = \{1, 7, 9\}$, $R_9 = \{3, 9\}$, $R_{10} = \{11\}$, $R_{11} = \{1, 11, 12\}$, $R_{12} = \{2, 11\}$. 
Here, the nested collection of hyperedges is $\{ \mc{E}_1, \mc{E}_2, \mc{E}_3\}$, where $\mc{E}_1 = \{R_1\}$, $\mc{E}_2 = \{R_2, R_3\}$ and $\mc{E}_3 = \{R_4, R_5, R_6, R_7\}$. It is easy to verify that the conditions required in Definition \ref{defn:nestingcollection} are met by this collection. In fact, it can be shown that there is no nested collection of length $4$ for this example. Hence, $\eta(\ch)=3$. 
\begin{figure}[htbp]
    \centering
\includegraphics[width=0.35\textwidth]{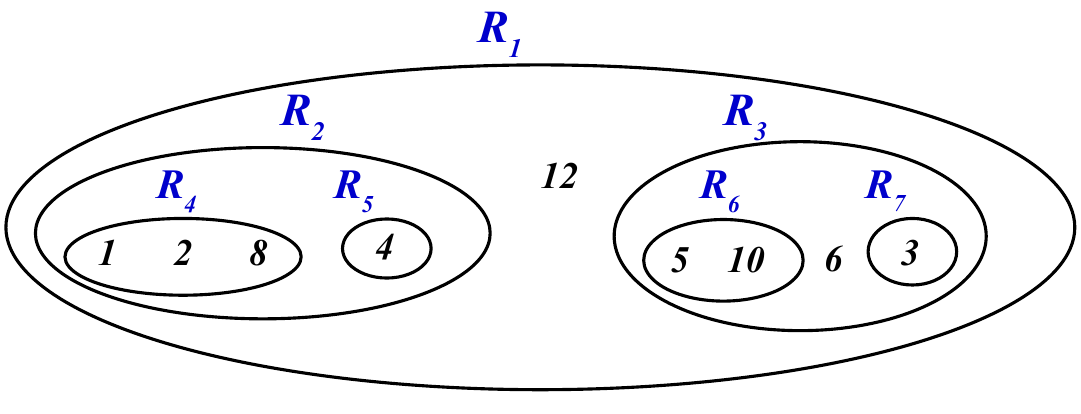}
    \caption{Nested collection of hypergraph $\mc{H}$ in Example \ref{example1} with nesting length $\eta(\mc{H})=3$. Each ellipse represents an edge, denoted by $R_j:j\in[7]$. The vertices are represented by integers.}
    \label{fig:fig 1}
\end{figure}
Now, we describe the intuition behind why $\beta(\ch)\geq 3$. Suppose, in some PICOD scheme which satisfies all the clients, client $R_1\in\mc{E}_1$ is satisfied by message $b_{i_1}$ (where $i_1\in R_1$). Observe that by the structure of the nested collection, there is some $R_{j_2}\in\mc{E}_2$ for which $i_1\in ([m]\setminus R_{j_2})$ (i.e., $b_{i_1}$ in its side-information). Now assume that this client $R_{j_2}$ is satisfied by message $b_{i_2}$ (where $i_2\in R_{j_2}$). Note that $i_1,i_2$ must be distinct. Once again, there will be some client $R_{j_3}\in\mc{E}_3$ which is having both $i_1,i_2 \in [m]\setminus R_{j_3}$ (i.e., $b_{i_1},b_{i_2}$ are both in side-information set of client $j_3$), by the structure of the nested collection. This means that $R_{j_3}$ is satisfied only by another distinct message $b_{i_3}$ ($i_3\in R_{j_3}$). Again, because of the nested structure, we observe that the $R_{j_3}$ includes in its side-information all the side-information symbols of $R_{j_2}$ (and thus of $R_1$ as well). Because of this reason,  $R_{j_3}$ can decode all the three symbols $b_{i_1},b_{i_2},$ and $b_{i_3}$. A simple information theoretic argument would then imply that at least $3$ transmissions are required in this code. Thus, $\beta(\ch)\geq 3$.

Finally, we see that there is a PICOD scheme of length $3$ for $\ch$. The following three transmissions serve as a PICOD scheme: $x_1 = b_{1} + b_3 + b_{4} + b_{5}$, $x_2 = b_{2} + b_{6}$ and $x_3 = b_4 + b_{11}$. It can be verified that the first transmission $x_1$ satisfies the clients $R_4,R_5,R_6,R_7,R_8,R_9$; $x_2$ satisfies clients $R_2, R_3$; and $x_3$ satisfies $R_1,R_{10},R_{11}$,and $R_{12}$. Thus, for this example, we have $\beta(\ch)=\eta(\ch)=3$.
%%%%
\end{example}
\begin{example}
\label{example2}
We also give another example for a hypergraph for which $\eta(\ch) = 2$, shown in Fig. \ref{fig:fig 2}. For this example, vertex set $\mc{V} = [9]$ and edge set $\mc{E} = \{R_i:i\in[9]\}$, where $R_i = \{i\}:i \in [1:6]$, $R_7 = \{
1, 2, 7, 8\}$, $R_8 = \{
3, 4, 7, 9\}$, and $R_9 = \{
5, 6, 8, 9\}$. It is easy to see that there is a nested collection of length $L=2$ given by $\{ \mc{E}_1, \mc{E}_2\}$, where $\mc{E}_1 = \{R_7\}$ and $\mc{E}_2 = \{R_1, R_2\}$. It can be seen that $\eta(\mc{H}) = 2$. By Theorem \ref{thm:lowerboundeta} (via similar intuitive arguments as in Example \ref{example1}), the number of transmissions in a PICOD scheme is at least two. Further, as $\D(\ch)=2$, it must be that $\beta(\ch)=2$. One such optimal scheme consists of the two transmissions: $x_1 = b_{1} + b_{3} + b_{5}$ and $x_2 = b_{2} + b_{4} + b_{6}$. Note that $x_1$ satisfies clients $R_1, R_3, R_5, R_7, R_8$ and $R_9$; $x_2$ satisfies the remaining clients $R_2, R_4$ and $R_6$. Thus for this example, $\beta(\ch)=\eta(\ch)=\D(\ch)=2$.
\begin{figure}[htbp]
    \centering
    \includegraphics[width=0.25\textwidth]{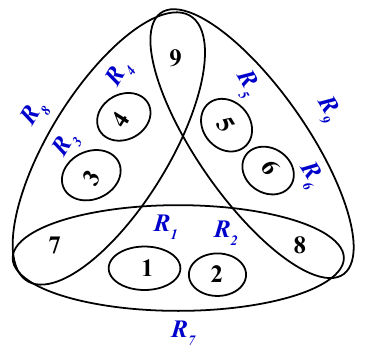}
    \caption{Hypergraph corresponding to Example \ref{example2} with nesting length $\eta(\mc{H})=2$. Integers represent messages, ellipses represent edges $R_j:j\in[9]$.}
    \label{fig:fig 2}
\end{figure}
\end{example} 
%%%%%%%%%%

\subsection{Proving Theorem \ref{thm:lowerboundeta}}
\label{subsec:proofoftheoremlowerbound}
To prove Theorem \ref{thm:lowerboundeta}, we first recall the formulation of the index coding problem from \cite{ISCOD_BiK,BiK}, the acyclic induced subgraph lower bound for index coding from \cite{neely2013dynamic}, and their relationship to the PICOD problem. The system model of an index coding problem, denoted by ${\cal I}$, consists of a server containing a set of $m'$ messages (symbols $b_i:i\in[m']$ from a finite field) connected via a broadcast link to $n'$ clients. As in the PICOD problem, the client $i$ possesses side-information symbols, indexed by $S_i'\subset[m']$. Let $R_i'=[m']\setminus S_i'$ denote the non-side-information symbols at client $i$. In contrast to the PICOD problem, in the index coding problem, a specific message (say, with index $d_i\in R_i'$) is demanded by the client $i$. An index code of length $l$ is a set of $l$ transmissions made by the server to satisfy the client demands. We denote by $\beta({\cal I})$ the smallest length of any index code (over any finite field) for $\cal I$. In a PICOD problem $\ch$, if the decoding choices (messages demanded) of the PICOD clients are fixed, then it reduces to an index coding problem. Let the fixed set of decoding choices for clients be denoted by $D=(b_{d_1},\hdots,b_{d_n})$, where $b_{d_i}$ denotes the decoding choice of client $i$. Let the index coding problem obtained from the PICOD hypergraph $\ch$ corresponding to the decoding choices $D$ be denoted by $\ch_D$. It then holds from Lemma 1, \cite{ong2019optimal} that \begin{align}\label{eqn:lowerboundmin}
\beta(\ch)=\min_D\beta(\mc{H}_D),
\end{align} 
where the minimization is over all possible decoding choices $D$. We now recall the lower bound for index coding from \cite{neely2013dynamic}, which applies to $\beta(\ch_D)$. Then, using (\ref{eqn:lowerboundmin}), we can prove the lower bound in Theorem \ref{thm:lowerboundeta}. 

%For an index coding problem, let, $m^\prime$ and $n^\prime$ represent the number of messages and clients respectively. Let the messages be represented as $b_j^\prime$ for all $j \in [m^\prime]$ and clients be represented as $R_i^\prime$ for all $i \in [n^\prime]$. For each client $R_i^\prime :i \in [n^\prime]$  let, $S_i^\prime \subseteq [m^\prime]$ be the set of indices of messages available as its side-information, and $d_i^\prime \in [m^\prime] \setminus S_i^\prime$ be the demand of client $i$. Let, $\beta(\mc{H_I})$ represent the smallest length of this index coding problem.

For a given index coding problem ${\cal I}$, we can construct a directed bipartite graph $G({\cal I})$ with left-vertices $R_j':j\in[n']$ and right-vertices $b_j:j\in[m']$, as in \cite{neely2013dynamic}. The definition of the edges in this graph is as follows. 
\begin{itemize}
    \item For each client $i$, a single solid arrow points from the demanded message $b_{d_i}$ to $R_i'$.
    \item For each client $i$, dashed arrows point from $R_i'$ to the messages that are present in its side-information set $\{b_j:j\in S_i'\}$.
\end{itemize}
A lower bound for index coding problem was then derived in \cite{neely2013dynamic} by extracting an acyclic induced subgraph from $G({\cal I})$ by performing some pruning operations. Further, it is not difficult to show that these pruning operations can extract any acyclic induced subgraph of $G({\cal I})$. Thus, from \cite{neely2013dynamic} (Theorem 1 and Lemma 1 in \cite{neely2013dynamic}), if $L$ is the number of message-indices (right-vertices) in any acyclic induced subgraph of $G({\cal I})$, then 
\begin{align}
\label{inequality:IC_lowerbound_acyclic}
    \beta({\cal I})\geq L.
\end{align}
%%%%%%%

%Using above Lemma \ref{Lemma 4} and the Lemma 3 of \cite{ong2019optimal}, the following Lemma \ref{Lemma 4} can be derived which gives a lower bound on the PICOD problem with its corresponding hypergraph $\mc{H}$ having a nesting number equal to $\eta(\mc{H})$. 
We are now ready to show the proof of Theorem \ref{thm:lowerboundeta}. 
%%%%
\begin{proof}[Proof of Theorem \ref{thm:lowerboundeta}]

Let there be a nested collection of hyperedges in $\ch$ with $L$ subsets of $\mc{E}(\ch)$, denoted by $\{\mc{E}_1,\hdots,\mc{E}_L\}$. Let $D=(b_{d_1},\hdots,b_{d_n})$ be some collection of decoding choices for the $n$ clients of the PICOD problem represented by $\ch$.  We shall show the following claim:

\textit{Claim:} For any choice of $D$, the bipartite graph representation of the index coding problem $\ch_D$ contains an acyclic induced subgraph with $L$ messages.

If the above claim is true, using (\ref{inequality:IC_lowerbound_acyclic}), we would have that $\beta(\ch_D)\geq L$. As $D$ is arbitrary in the claim, and by (\ref{eqn:lowerboundmin}), we would then have $\beta(\ch)=\min_D\beta(\ch_D)\geq L$. Finally, we can choose $L=\eta(\ch)$, by definition of the nesting number $\eta(\ch)$. This would complete the proof. 

We now prove the claim. Without loss of generality, let $\mc{E}_1=\{R_1\}$. Recall that $b_{d_1}$ is the decoding choice of $R_1$. Then, by the nesting structure, there is some client in $\mc{E}_2$, say $R_2\subset R_1$, such that $d_1\in [m]\setminus R_2$. Thus, the decoding choice $b_{d_2}$ of this client $R_2$ must be such that $d_1,d_2$ are distinct. Further, as $R_2\subset R_1$, we must have that $d_2\in R_1$ as well. By a similar argument, for each $i\in[L]$, there is some client $R_i\in \mc{E}_i$, such that the decoding choices of the prior clients $R_j:j\in[i-1]$ are in its side-information set, i.e., $\{d_j:j\in[i-1]\}\subseteq [m]\setminus R_i$. Thus, the decoding choice $b_{d_i}$ of $R_i$ must be distinct from $\{b_{d_j}:j\in[i-1]\}$. In other words, we must have clients $R_i:i\in[L]$ such that the decoding choices of $R_i:i\in[L]$ satisfy the following succinct condition.
\begin{align}
\label{eqn:acyclic_condition}d_i \in \bigcap\limits^i_{j=1}R_j
\setminus \bigcup\limits^L_{k=i+1}R_k, \forall i \in [L]
\end{align}
Note that this also means $|\{b_{d_i}:i\in[L]\}|=L$.

By (\ref{eqn:acyclic_condition}), we see that the bipartite graph representation of the index coding problem $\ch_D$ contains an induced subgraph with the $L$ messages $\{b_{d_i}:i\in[L]\}$ as represented in Fig. \ref{fig:bipartite_acyclic}. It is easy to prove that this is acyclic because:
\begin{itemize}
    \item If $b_{d_i}$ points to $R_j$, then $i=j$.
    \item If $R_j$ points to $b_{d_i}$, then $i < j$.
\end{itemize}
Thus, we never have a `downward' pointing arrow in the subgraph of Fig. \ref{fig:bipartite_acyclic}, hence there are no cycles. This completes the proof of the claim, and hence the proof of Theorem \ref{thm:lowerboundeta}. 

%Consider the following acyclic bipartite graph structure in Fig.\ref{fig:fig 3}.
\begin{figure}[htbp]
    \centering
    \includegraphics[width=0.3\textwidth]{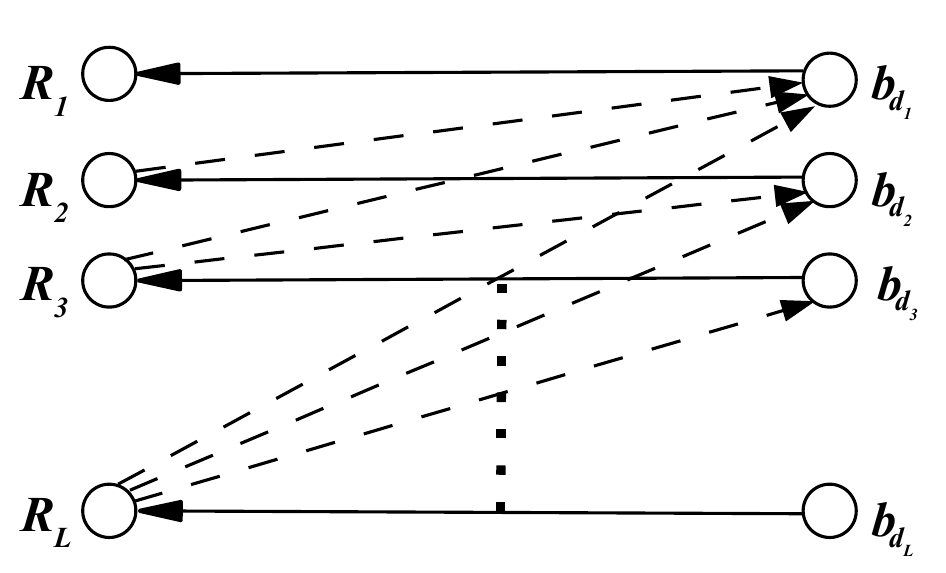}
    \caption{Acyclic induced subgraph used in proof of Theorem \ref{thm:lowerboundeta}: Solid edges from message vertex $b_{d_i}$ to client $R_i$ indicates that $b_{d_i}$ is the decoding choice for client $i$, and dotted edges from $R_i$ to $b_{d_j}$  indicates that $b_{d_j}$ is in side-information set of client $R_i$.}
    \label{fig:bipartite_acyclic}
\end{figure}
\end{proof}
\begin{remark}
Theorem \ref{thm:lowerboundeta} shows that the existence of a special substructure (the nested collection) leads to a lower bound for PICOD. We use a decoding chain argument as in prior works on converses for PICOD \cite{liu2019tight,ong2019optimal,ong2019improved}, which were essentially based on the index coding converse from \cite{neely2013dynamic}. While the work \cite{liu2019tight} mainly used this for the case of specially structured PICOD problems, the works \cite{ong2019optimal,ong2019improved} focussed on identifying structural characteristics in the set of `absent' clients (i.e., subsets of $[m]$ which no client in the given problem has as its request-set). In the present work, we focus on the `present' clients, i.e. those request-sets which are included in the given problem. This gives a different, and possibly a more natural method to characterize a converse for PICOD. 
\end{remark}
%%%%
The following corollary gives a class of hypergraphs for which the upper bound from Theorem \ref{thm:upperboundDelta} is tight. The corollary follows directly from Theorems \ref{thm:upperboundDelta} and \ref{thm:lowerboundeta}. 
\begin{corollary}
\label{Corollary-1}
Let $\mc{H}$ be a PICOD hypergraph with $\eta(\mc{H}) = \D(\mc{H})$. Then, $\beta_q(\mc{H}) = \D(\mc{H})$ for every prime power $q$, and hence $\beta(\ch)=\D(\ch)$.
\end{corollary}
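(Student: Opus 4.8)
The plan is to obtain the equality $\beta_q(\mc{H})=\D(\mc{H})$ by sandwiching $\beta_q(\mc{H})$ between a matching upper and lower bound, both of which are already available from the two main theorems, and then transfer the field-specific conclusion to the field-agnostic quantity $\beta(\ch)$ at the very end.

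First I would invoke Theorem \ref{thm:upperboundDelta} directly to record the upper bound $\beta_q(\mc{H})\leq \D(\mc{H})$, which holds for every prime power $q$. For the lower bound, Theorem \ref{thm:lowerboundeta} gives $\beta(\mc{H})\geq \eta(\mc{H})$, but this is phrased in terms of $\beta(\mc{H})=\min_{q'}\beta_{q'}(\mc{H})$ rather than in terms of a fixed field. The key bridging observation is that, by the very definition of $\beta(\mc{H})$ as a minimum over fields, we have $\beta_q(\mc{H})\geq \beta(\mc{H})$ for every individual prime power $q$. Chaining these gives $\beta_q(\mc{H})\geq \beta(\mc{H})\geq \eta(\mc{H})$ for every $q$.

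Now I would feed in the hypothesis $\eta(\mc{H})=\D(\mc{H})$. Substituting it into the lower-bound chain yields $\beta_q(\mc{H})\geq \D(\mc{H})$ for every prime power $q$, and together with the upper bound $\beta_q(\mc{H})\leq \D(\mc{H})$ this forces $\beta_q(\mc{H})=\D(\mc{H})$ for every $q$. Finally, taking the minimum over $q$ of this field-wise equality gives $\beta(\mc{H})=\min_{q}\beta_q(\mc{H})=\D(\mc{H})$, which is the second assertion.

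There is essentially no hard obstacle here, since the corollary is a direct composition of the two theorems; the only point requiring care is the interplay between the field-specific optimal length $\beta_q$ and the minimum-over-fields quantity $\beta$. The subtlety is that the upper bound of Theorem \ref{thm:upperboundDelta} is stated per field while the lower bound of Theorem \ref{thm:lowerboundeta} is stated for the minimum over fields, so I must be explicit that the lower bound $\beta(\mc{H})\geq\eta(\mc{H})$ propagates to each $\beta_q$ via the trivial inequality $\beta_q(\mc{H})\geq\beta(\mc{H})$. Once that is observed, the two bounds meet exactly under the hypothesis $\eta(\mc{H})=\D(\mc{H})$ and the result follows in a line or two.
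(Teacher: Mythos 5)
Your proof is correct and is exactly the argument the paper intends: the corollary is stated there as following directly from Theorems \ref{thm:upperboundDelta} and \ref{thm:lowerboundeta}, and your chaining $\D(\ch)=\eta(\ch)\leq\beta(\ch)\leq\beta_q(\ch)\leq\D(\ch)$ is the natural way to make that precise. Your explicit handling of the bridge $\beta_q(\ch)\geq\beta(\ch)$ between the per-field and minimum-over-fields quantities is a detail the paper leaves implicit, but it is the same approach.
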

%%%%
We now generalize the bound in Theorem \ref{thm:lowerboundeta} to the PICOD($t$) scenario. Recall that if a hypergraph $\ch(\mc{V}, \mc{E})$ represents a PICOD($t$) problem, then $|R|\geq t$ for each $R\in\mc{E}$. \begin{lemma}
\label{lemma:generalizedlowerboundetaforPICOD(t)}
For a PICOD($t$) hypergraph $\mc{H}(\mc{V}, \mc{E})$, suppose there exists  a collection of subsets of $\mc{E}$ written as $\{ \mc{E}_i\subseteq \mc{E}: i\in[L]\}$, such that for each $i\in[L-1]$, the following condition holds.
\begin{itemize}
    \item For each $R\in \mc{E}_i$, for any subset $T\subset R$ with $|T|=t$, there exists an edge $R'\in\mc{E}_{i+1}$ such that $T\cap R'=\emptyset$.
\end{itemize}
Then $\beta^{(t)}(\ch)\geq tL$.
\end{lemma}
%%%%%%%
We omit the proof of Lemma \ref{lemma:generalizedlowerboundetaforPICOD(t)}, as it is essentially identical to Theorem \ref{thm:lowerboundeta}. Clearly, choosing the collection of subsets with the largest possible value for $L$ gives us the best bound that can be obtained from Lemma \ref{lemma:generalizedlowerboundetaforPICOD(t)}.

\subsection{Complete categorization of $\D(\mc{H}) = 1, 2, 3$}
\label{subsec:lowdeltaproblems}
%A hypergraph is said to be \textit{strictly $t$-solvable} if it can be satisfied in $t$ transmissions but not less than $t$ transmissions. We shall now look at problems that are $1$-solvable and $2$-solvable

%The conditions for $1$-solvability:
%If there exists an independent set of vertices (as mentioned in section \ref{sec-1}), and every hyperedge is not disjoint with this independent set, then the hypergraph is $1$-solvable. Our observation is equivalent to proposition 8 in \cite{liu2019tight}.

In this subsection, we characterize the PICOD problems with $\D(\ch)\in\{1,2,3\}$ completely. These results follow directly from known results in index coding, in conjunction with our results. 

From (\ref{eqn:lowerboundmin}), we see that if there is some decoding choice $D$ such that $\beta(\ch_D)=\ell$, then clearly $\beta(\ch)\leq \ell$. From this, the following lemma is obtained. 
%%%%
\begin{lemma}
\label{lemma:lengthone}
$\beta(\ch)=1$, if and only if there exists some decoding choices $D=(b_{d_1},\hdots,b_{d_n})$ for the $n$ clients in $\ch$, such that the set $\cup_{i\in[n]}\{d_i\}$ is an independent set of $\ch$.
\end{lemma}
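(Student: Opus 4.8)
The plan is to prove Lemma \ref{lemma:lengthone} as a biconditional, leveraging the identity $\beta(\ch)=\min_D\beta(\ch_D)$ from (\ref{eqn:lowerboundmin}) together with the simple broadcast construction already used in Lemma \ref{Lemma 1} and Lemma \ref{Lemma 2}. The key observation is that a single transmission can satisfy all clients precisely when we can select one decoding-choice message per client so that the selected messages form an independent set, because only then does the natural sum-transmission $\sum_i b_{d_i}$ decode unambiguously at every client.

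\textbf{Sufficiency.} First I would assume there exist decoding choices $D=(b_{d_1},\hdots,b_{d_n})$ such that $I \triangleq \cup_{i\in[n]}\{d_i\}$ is an independent set of $\ch$. I would then have the server transmit the single symbol $x=\sum_{v\in I}b_v$. For each client $R_i$, its chosen message index $d_i$ lies in both $I$ and $R_i$, so $R_i\cap I\neq\emptyset$; since $I$ is independent, $\lvert R_i\cap I\rvert=1$, meaning exactly one summand of $x$ is unknown to client $i$ while every other summand is in its side-information. Hence client $i$ recovers $b_{d_i}\in R_i$, so all clients are satisfied by one transmission and $\beta(\ch)\leq 1$. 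Combined with the trivial lower bound $\beta(\ch)\geq 1$ (there is at least one non-empty request-set, exactly as argued in the converse of Lemma \ref{Lemma 1}), this gives $\beta(\ch)=1$.

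\textbf{Necessity.} For the converse direction I would assume $\beta(\ch)=1$. By (\ref{eqn:lowerboundmin}), $\beta(\ch)=\min_D\beta(\ch_D)$, so there is some decoding choice $D=(b_{d_1},\hdots,b_{d_n})$ for which the induced index coding problem satisfies $\beta(\ch_D)=1$; that is, a single encoded transmission simultaneously satisfies every client's fixed demand $b_{d_i}$. The core step is to argue that the existence of a length-one index code forcing each client $i$ to decode $b_{d_i}$ from its side-information implies that $I=\cup_{i\in[n]}\{d_i\}$ is independent in $\ch$. I would establish this by contradiction: suppose two chosen indices $d_i$ and $d_j$ (with $d_i\neq d_j$) are adjacent, i.e.\ some request-set $R_k$ contains both. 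Then client $k$ has neither $b_{d_i}$ nor $b_{d_j}$ in its side-information, so with only one scalar transmission it cannot disambiguate its demanded symbol from these two unknowns — a standard single-equation argument shows no length-one code can let $R_k$ decode, contradicting $\beta(\ch_D)=1$. Hence no two elements of $I$ are adjacent and $I$ is independent.

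\textbf{Main obstacle.} I expect the delicate point to be the necessity direction, specifically making rigorous why a length-one index code cannot satisfy a client whose request-set contains two of the chosen decoding indices. The informal statement ``one equation cannot resolve two unknowns'' must be turned into a precise information-theoretic argument valid over an arbitrary finite field $\mathbb{F}_q$, since $\beta(\ch)$ minimizes over all fields. The cleanest route is to invoke the acyclic-induced-subgraph bound (\ref{inequality:IC_lowerbound_acyclic}) already cited from \cite{neely2013dynamic}: if $R_k$ contains both $d_i$ and $d_j$, the two message-vertices $b_{d_i},b_{d_j}$ together with the relevant clients form an acyclic induced subgraph of $G(\ch_D)$ with two right-vertices, forcing $\beta(\ch_D)\geq 2$ and contradicting $\beta(\ch_D)=1$. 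This reuses machinery already set up for Theorem \ref{thm:lowerboundeta} and avoids re-deriving a field-dependent rank argument from scratch.
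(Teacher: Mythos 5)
Your overall strategy mirrors the paper's proof: sufficiency via the sum transmission, and necessity via (\ref{eqn:lowerboundmin}) together with the acyclic-induced-subgraph bound (\ref{inequality:IC_lowerbound_acyclic}). The sufficiency half is correct. The necessity half, however, has a genuine gap at exactly the step you flag as the main obstacle: the subgraph you propose need not be acyclic. For the bound (\ref{inequality:IC_lowerbound_acyclic}) to apply, each right-vertex (message) in the subgraph must be demanded by a client inside the subgraph, so choosing the right-vertices $b_{d_i},b_{d_j}$ forces you to include their demanders $R_i$ and $R_j$; but the induced subgraph on $\{R_i,R_j,b_{d_i},b_{d_j}\}$ can contain the cycle $b_{d_i}\to R_i\to b_{d_j}\to R_j\to b_{d_i}$, which occurs whenever each of these two clients has the other's demand in its side-information. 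Concretely, take $m=3$, $R_i=\{1\}$, $R_j=\{2\}$, $R_k=\{1,2,3\}$, with decoding choices $d_i=1$, $d_j=2$, $d_k=3$. Here $d_i$ and $d_j$ are adjacent (via $R_k$), yet your subgraph on $\{R_i,R_j,b_1,b_2\}$ contains the cycle above, so it yields no contradiction. Your first-pass argument is also not valid as stated: a single transmission (e.g., the uncoded symbol $b_{d_k}$) can perfectly well satisfy a client with many unknowns, so ``one equation cannot resolve two unknowns'' is false as a claim about client $k$ alone; the impossibility must come from jointly satisfying several clients.

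The repair — which is what the paper's proof implicitly does — is to use the adjacency-witnessing client $R_k$ itself as one of the two clients, with its \emph{own} decoding choice as one of the two message vertices. Since $d_i\neq d_j$, at least one of them, say $d_j$, differs from $d_k$. Consider the induced subgraph on $\{R_k,R_j,b_{d_k},b_{d_j}\}$. Its solid arrows are $b_{d_k}\to R_k$ and $b_{d_j}\to R_j$, and crucially $R_k$ has \emph{no} outgoing dashed arrow to either message vertex, because both $d_k$ and $d_j$ lie in $R_k$ and hence are not in $R_k$'s side-information. The only possible dashed arrow is $R_j\to b_{d_k}$, so the subgraph is acyclic with two right-vertices, giving $\beta(\ch_D)\geq 2$ and the desired contradiction; in the example above this amounts to picking the messages $b_3,b_2$ rather than $b_1,b_2$. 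This selection is precisely the content of the paper's assertion that one can always find two clients with distinct decoding choices such that \emph{one of these clients contains both choices in its request-set} — that clause is what kills all potential cycles, and it is the ingredient your write-up is missing.
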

%%%%
\begin{proof}
The achievability is easy to see, as transmitting a sum of the unique messages in $D$ satisfies all receivers. Now, suppose the condition does not hold. That is, for every decoding choice $D$, there are at least two clients (say $R_1$ and $R_2$) with distinct decoding choices (say $d_1$ and $d_2$) such that one of these clients (say $R_1$) contains both $d_1,d_2$ in its request-set. Thus, the bipartite graph corresponding to the index coding problem $\ch_D$ has a acyclic induced subgraph of size $2$ (consisting of $R_1,R_2,b_{d_1},b_{d_2}$). By Lemma \ref{eqn:acyclic_condition}, we see that $\beta(\ch_D)\geq 2$. Thus, the proof is complete. 
\end{proof}
\begin{remark}
The result in Lemma \ref{lemma:lengthone} was previously proved in \cite{liu2019tight} (Proposition 8 in \cite{liu2019tight}), using similar arguments. Our proof directly relies upon the lower bound for index coding from \cite{neely2013dynamic} and is therefore more succinct.
\end{remark}
%%%%%
Matching necessary and sufficient conditions for length-2 index coding were derived in \cite{Blasiak_BoundingIC,6855359_Maleki_TIT_IC_IA}. We refer the reader to these works for the conditions (see Theorem V.9 in \cite{Blasiak_BoundingIC}, Corollary 1 in \cite{6855359_Maleki_TIT_IC_IA}). Clearly, if there is some decoding choice collection $D$ for the given PICOD problem $\ch$ such that the resultant index coding problem $\ch_D$ is length-$2$ solvable (which can be verified by the conditions in \cite{Blasiak_BoundingIC,6855359_Maleki_TIT_IC_IA}), then the given PICOD problem also has a PICOD scheme of length two. We thus see that we have a characterization of all PICOD problems with $\Delta\in\{1,2,3\}$, by using our results in this subsection and the results from \cite{Blasiak_BoundingIC,6855359_Maleki_TIT_IC_IA}, in conjunction with Theorem \ref{thm:upperboundDelta}. 

% The condition for $2$-solvability: If the PICOD problem is not $1$-solvable, and there exists at least for one fixed demand set for which two transmissions are sufficient to satisfy all the clients, then the PICOD problem is $2$-solvable. Conditions for $2$-solvability for index coding problem is present in \cite{6855359_Maleki_TIT_IC_IA} and \cite{lalitha2017rate}.

% Thus for PICOD-hypergraphs with $\D(\mc{H}) \leq 3$, we have a complete categorization.

% Now we show one example from each category for PICOD-hypergraphs with $\D=2$ and $\D=3$
% Note: the highest message index in the following examples corresponds to the number of message for the PICOD problem in that example.

% \textit{$\D=2$:}

% 1-solvable: \{\{1,3\}, \{4,5\}, \{2,3,4\}, \{6,7\}, \{1,2\}, \{5,6\}\}

% 2-solvable: \{\{1\}, \{5,6\}, \{1,2,5\}, \{4\}, \{3,4,6\}, \{2,3\}\}

% \textit{$\D=3$:}

% 1-solvable: \{\{1,2,3\}, \{5,6\}, \{1,4,5,6\}, \{5,7\}\}

% 2-solvable: \{\{1,2\},\{4,5\},\{1,2,3,4,5,7\},\{7,8,9\},\{2,5,6,7,8\}\}

% 3-solvable: \{\{1\}, \{2,7,8\}, \{1,2,3\}, \{4\}, \{9,8\}, \{1,2,3,4,5,6,8\}, \{5\}, \{6,7\}, \{3\}, \{4,5,6\}\}
\section{Discussion}
\label{sec:discussion}
In this work, we have presented a achievability result	
for PICOD based on the maximum degree of a hypergraph	
which represents the PICOD problem. A polynomial-time	
deterministic algorithm meeting the achievability result was	
also presented, adding to the list of polynomial-complexity	
algorithms for PICOD in literature. We also showed a converse	
based on a special substructure in the hypergraph called a	
nested collection of hyperedges. Recent results also indicate	
that there is a simple algorithm to compute the quantity	
$\eta(\mc{H})$. Naturally, we would be interested to see if other low-	
complexity algorithms can be found for PICOD with achiev-	
able lengths better than the known algorithms. Also, it would	
be interesting to see if we could characterize achievability or	
converses of PICOD problems based on other more refined	
hypergraph parameters.

Also, while the results in Subsection \ref{subsec:lowdeltaproblems} suggest checking for some hypergraph substructures for deciding on the feasibility of length-$1$ and length-$2$ PICOD schemes, it is still not clear whether the existence of such substructures in a given hypergraph can be checked with polynomial complexity. This is an interesting direction for future work. It would also be worthwhile to characterize PICOD problems for larger values of maximum degree $\D(\ch)$ greater than $3$. 
% \bibliography{ArXiv_ResearchGate_2022_PICOD_hyp.bib}
% \bibliographystyle{ieeetr}
\bibliographystyle{IEEEtran}
\bibliography{IEEEabrv,ArXiv_ResearchGate_2022_PICOD_hyp}
\end{document}